\newtheorem{theorem}{Theorem}
\newtheorem{lemma}{Lemma}
\title{Closed-form analysis  of Multi-RIS Reflected Signals in RIS-Aided Networks Using Stochastic Geometry }
\author[1, 2]{Guodong~Sun}
\author[1, 2]{Fran\c{c}ois~Baccelli}
\affil[1]{Department d'informatique, Ecole Normale Sup\'erieure. Email: guodong.sun$|$francois.baccelli@ens.fr}
\affil[2]{Institut national de recherche en sciences et technologies du numérique (INRIA)}
\begin{document}
\maketitle
\begin{abstract}
Reconfigurable intelligent surfaces (RISs) enhance wireless communication by creating engineered signal reflection paths in addition to direct links. 
This work presents a stochastic geometry framework using point processes (PPs) to model multiple randomly deployed RISs conditioned on their associated base station (BS) locations.
By characterizing aggregated reflections from multiple RISs using the Laplace transform, we analytically assess the performance impact of RIS-reflected signals by integrating this characterization into well-established stochastic geometry frameworks.
Specifically, we derive closed-form expressions for the Laplace transform of the reflected signal power in several deployment scenarios.
These analytical results facilitate performance evaluation of RIS-enabled enhancements. 
Numerical simulations validate that optimal RIS placement favors proximity to BSs or user equipment (UEs), and further quantify the impact of reflected interference, various fading assumptions, and diverse spatial deployment strategies. 
Importantly, our analytical approach shows superior computational efficiency compared to Monte Carlo simulations.

\end{abstract}
\begin{IEEEkeywords}
Stochastic geometry, cellular network
\end{IEEEkeywords}

\section{Introduction}
The reconfigurable intelligent surface (RIS) technology, with its dynamic control of reflected waves, provides a promising solution for improving cellular network performance, particularly for coverage extension and data rate enhancement~\cite{liu2022path, wu2019intelligent}.
However, the irregular placement of RISs and the signal fluctuation of their reflection links pose significant challenges for performance analysis, especially multiple RISs~\cite{di2019reflection}.
To address this challenge, stochastic geometry has been applied for modeling and evaluating the performance of cellular networks at a system level~\cite{baccelli2010stochastic, baccelli2020random}. 

A common approach is to approximate the composite signal formed by the direct and reflected links using a representative distribution for a specific RIS location, and subsequently consider the spatial distribution of this RIS.
For example, the authors in \cite{lyu2021hybrid} model both base stations (BSs) and RISs as homogeneous Poisson point processes (PPP), and associate the typical user equipment (UE) to the nearest BS and RIS. 
Given an association layout, the composite signal power is approximated by a Gamma distribution using the moment matching method, and then a double integral expression is derived to quantify the coverage probability considering the nearest neighbor distributions of both the BS and the RIS. 
To consider the spatial correlation between RISs and BSs, the authors in~\cite{wang2023performance} use a Gauss-Poisson process to model the BSs and RISs with a fixed separation distance. They also apply the moment matching framework to approximate a Gamma distributed composite signal for performance analysis. 

Recent works have proposed analytical solutions for the RIS reflected signals instead of the moment matching approximation. 
For instance, the authors in \cite{deng2024modeling} accurately characterize the composite signal as Erlang-distributed when both the direct and RIS-reflected links experience Rayleigh fading conditioned on the RIS location. 
Then, the ergodic rate is obtained by averaging over the spatial distribution of this RIS.  
Similar to the previous moment matching method, merging the reflected signal into the direct link becomes computationally intractable with multiple RISs, as it necessitates integrating over the location of each RIS.
In contrast, the authors in \cite{sun2023performance} proposed a framework that models multiple RISs as a point process (PP),  characterizing the aggregated reflected signal power by its Laplace transform. 
This framework accommodates general fading and arbitrary RIS deployment areas, offering the key advantage of analytical tractability regardless of the number of RISs, which is reviewed and utilized as the general framework in this paper. 

While analytical expressions in integral form have been developed for characterizing the reflected signals from RISs, closed-form solutions are most desirable for efficient stochastic geometry analysis.
For example, the authors in \cite{andrews2011tractable} derived a closed-form expression for the Laplace transform of aggregated interference under Rayleigh fading with a path loss exponent of 4, providing essential insights for understanding the system interference.
The contribution of this work is the derivation of such closed-form expressions for the Laplace transform of the aggregated reflection signals, detailed in Section~\ref{section:special_closed_form}. 
These expressions apply to the scenario where multiple RISs are positioned at a fixed distance from their associated BS, with a path loss exponent of 2 and under both Rayleigh and Nakagami-$m$ fading.
{ Our closed-form analysis focuses on the reflected signal, which is sufficient for signal-to-noise ratio (SNR)-based analysis of multi-RIS assisted networks. 
This focus is justified by existing BS interference models~\cite{andrews2011tractable} and our planned exploration of RIS reflected interference with flexible assumptions. }

This work is organized as follows: 
Section~\ref{section:system_model} introduces the stochastic geometry system model for the RIS-assisted cellular networks.
Section~\ref{section:general_soluation} reviews the general analytical framework for performance evaluation in multi-RIS assisted networks. 
Closed-form expressions for the Laplace transforms of the aggregated reflection signal power are derived in Section~\ref{section:special_closed_form}.
Then, Section~\ref{sec:simulation} presents numerical results that demonstrate the impact of RIS modeling and deployment strategies on system performance evaluation.
Finally, Section~\ref{sec:conclusion} concludes the paper and discusses future research directions.

\section{System Model}\label{section:system_model}
\begin{figure}
\centering
\includegraphics[width=\linewidth]{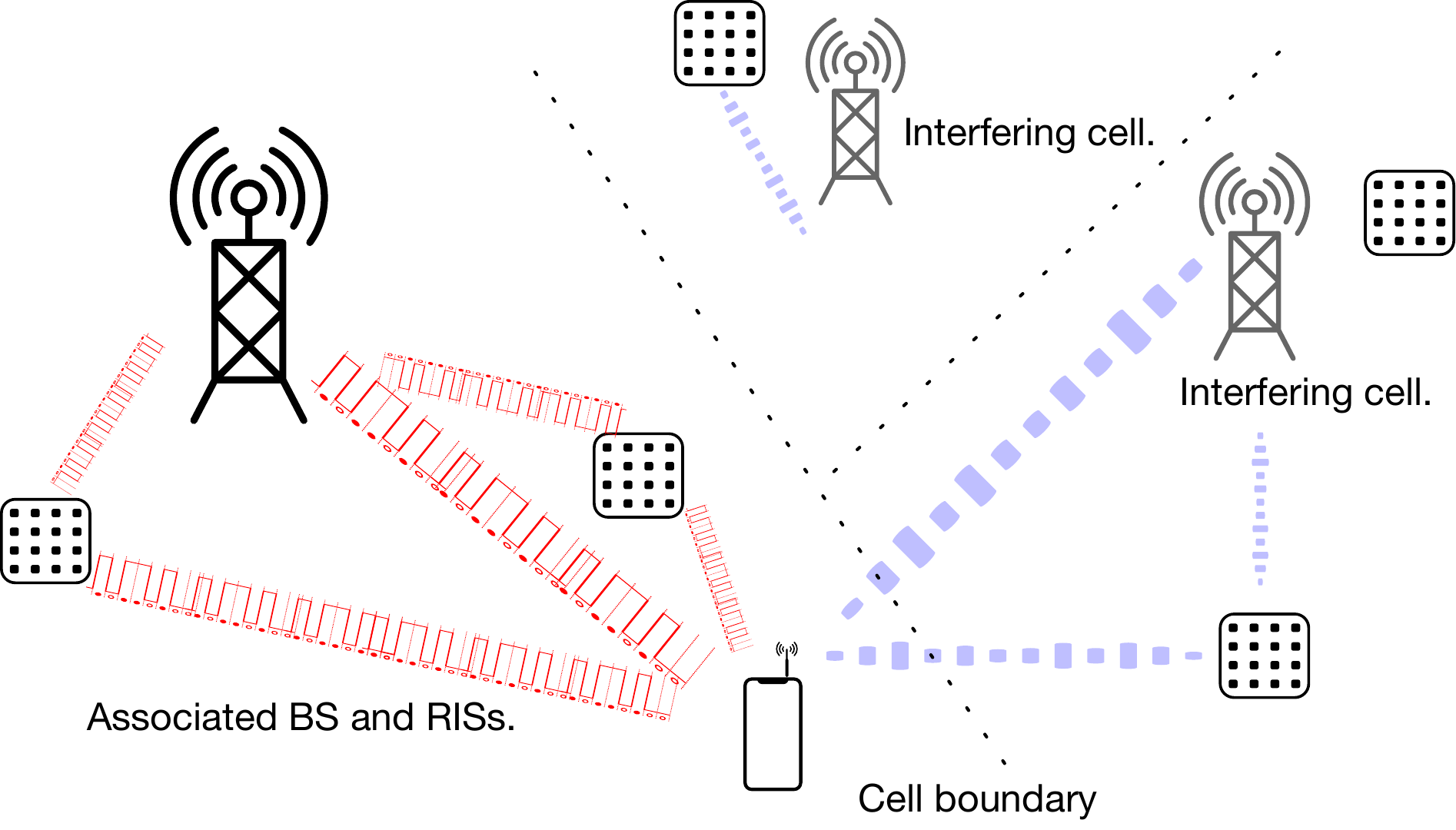}
\caption{Randomly located RISs reflect signals to assist BS-UE communication, with inter-cell BSs and RISs generating interference. }
\label{fig:system_model}
\vspace{-0.5cm}
\end{figure}
To model the spatial randomness of a RIS-assisted cellular network, we first represent the location of BSs as a homogeneous PPP, denoted by 
$\Phi_{\rm BS} \triangleq \sum_{i\in \mathcal I} \delta_{\mathbf{x}_i}$
with intensity $\lambda_{{\rm BS}}$, where $\mathcal{I}$ is the index set of all BSs and $\delta$ is the Dirac measure. 
We assume that each cell manages its own RISs.
Therefore, we model RIS locations as the union of conditionally independent PPs $\Phi_{\rm RIS} \triangleq \cup_{i\in \mathcal{I}} \phi_i$,
where $\phi_i\triangleq \sum_{j \in \mathcal{J}_i}\delta_{\mathbf{y}_{i,j}}$ represents the RIS locations managed by the cell of BS $i$ and $\mathcal{J}_i$ is the index set of these RISs.
The specific distribution of the RIS PP is intentionally left unspecified at this stage, as Section~\ref{section:general_soluation} will present a general solution applicable to various RIS spatial distributions. 
Specialized PPs will be introduced and discussed in Section~\ref{section:special_closed_form}. 

Next, we model the RIS reflection. 
We assume that each RIS performs beamforming by manipulating the phase of signals reflected by RIS elements to direct the reflected beam towards its associated UE. 
When aligned in phase, these signals superpose in amplitude, resulting in an amplitude gain proportional to the number of RIS elements, denoted by $M$. 
Consequently, the RIS reflection power gain scales quadratically with $M$, resulting in a gain of $M^2$.
In neighboring cells, RISs are configured by their respective BSs to reflect beamformed interference towards their target UE. 
This interference may be directed and hence overlap with the tagged UE. 
We assume an interference overlap probability of this random overlap, denoted by $\xi$, can be either fixed or a function of the beamwidth.
Specifically, we model the beamwidth as being inversely proportional to the number of RIS elements $M$, which leads to an overlap probability of $\xi =C_{\rm beam}/M$, where $C_{\rm beam}$ is a constant characterizing the beamforming capacity. 
We neglect the effects of RISs not configured for beamforming, such as those in neighboring cells that randomly scatter signals, as their contributions are already captured within the environmental fading model.

We analyze RIS-assisted downlink transmission from a BS to a tagged UE, where the UE is located in origin and at the nearest neighbor distance $r = \|\mathbf{x}_o\|$ from its serving BS. Here, the index $o$ refers to this serving BS. 
Due to the spatial distribution of RISs, the varying propagation distances of direct and reflected paths result in a time dispersive composite channel between the BS and UE.
As demonstrated in \cite{sun2023performance}, orthogonal frequency-division multiplexing (OFDM) modulation can mitigate the time-dispersion caused by multipath reflections, and the post-processed signal-to-interference-plus-noise ratio (SINR) is given by 
\vspace{-0.1cm}
\begin{equation}\label{eq:SINR-def}
\text{SINR} = \frac{|\gamma_{D_o}|^2 + \sum_{j \in \mathcal{J}_o} | \gamma_{{R}_{o,j}}|^2 }{\sum_{i\in \mathcal{I}\setminus o}( |\gamma_{D_i}|^2 + \sum_{j \in \mathcal{J}_i} \mathbbm{1}_{\{\xi_{i,j}\}} | \gamma_{{R}_{i,j}}|^2  ) + \sigma^2},
\vspace{-0.1cm}
\end{equation}
where $\sigma^2$ is the power of the additive Gaussian noise.
Here, we use $\gamma_{D_o}$ and $\gamma_{R_{o,j}}$ to denote the received signal amplitude of the direct and reflected intended signals, respectively, while $\gamma_{D_i}$ and $\gamma_{{R}_{i,j}}$ represent the corresponding received signal amplitude of the direct and reflected interference, respectively.
Each signal amplitude is the product of: a fading variable $\rho$, the square root of the transmission power $\sqrt{P_0}$, a path loss function dependent on the propagation distance, and, for the reflected links, a beamforming gain $M$.
Specifically, the direct signal amplitude is given by $ \gamma_D = \rho_{D} \sqrt{g(d)P_0}$ for distance $d$, while the reflected signal amplitude is $ \gamma_R = \rho_{R} M \sqrt{g(d_1)g(d_2) P_0}$, where the latter accounts for the two segments of the reflected path, i.e., $d_1=\|\mathbf{y}_{i,j}-\mathbf{x}_i \|$ and $d_2 =\|\mathbf{x}_i \| $. 
Furthermore, the indicator function $\mathbbm{1}_{\{\xi_{i,j}\}}$  represents the event that the interfering beam from the $j^{\rm th}$ RIS in the $i^{\rm th}$ cell overlaps with the tagged UE.
Below, we assume a power-law path loss function $g(d) = \beta d^{-\alpha}$, where $\alpha$ denotes the path loss exponent.
Here, the parameter $\beta = \big(\frac{c}{4\pi f_c}\big)^2$ represents the average power gain at a reference distance of $1$ meter, with $f_c$ being the carrier frequency and $c$ the speed of light. 

We assume that the direct links experience Rayleigh fading and are subject to a non line of sight (NLoS) channel. 
For reflected links, we will employ fading distributions consistent with existing literature, such as non-central Gaussian distributed fading~\cite{sun2023performance}, Rayleigh fading~\cite{deng2024modeling}, and Nakagami-$m$ fading~\cite{badiu2019communication, qian2020beamforming}. 
We also consider channel hardening~\cite{bjornson2020rayleigh}, where the reflected link's signal amplitude is assumed to be constant.
Specifically, we will utilize Rayleigh and Nakagami-$m$ fading in Section~\ref{section:special_closed_form} to derive closed-form expressions.

\section{General Framework}\label{section:general_soluation}

In this section, we first review the established stochastic geometry framework for wireless network performance analysis.
Subsequently, we analyze the Laplace transform of the aggregated reflected signal power from multiple RISs.
Finally, we examine how the deployment of multi-RISs impacts performance analysis and how these new network elements can be integrated into the existing framework. 

\begin{figure}
	\centering
	\includegraphics[width=\linewidth]{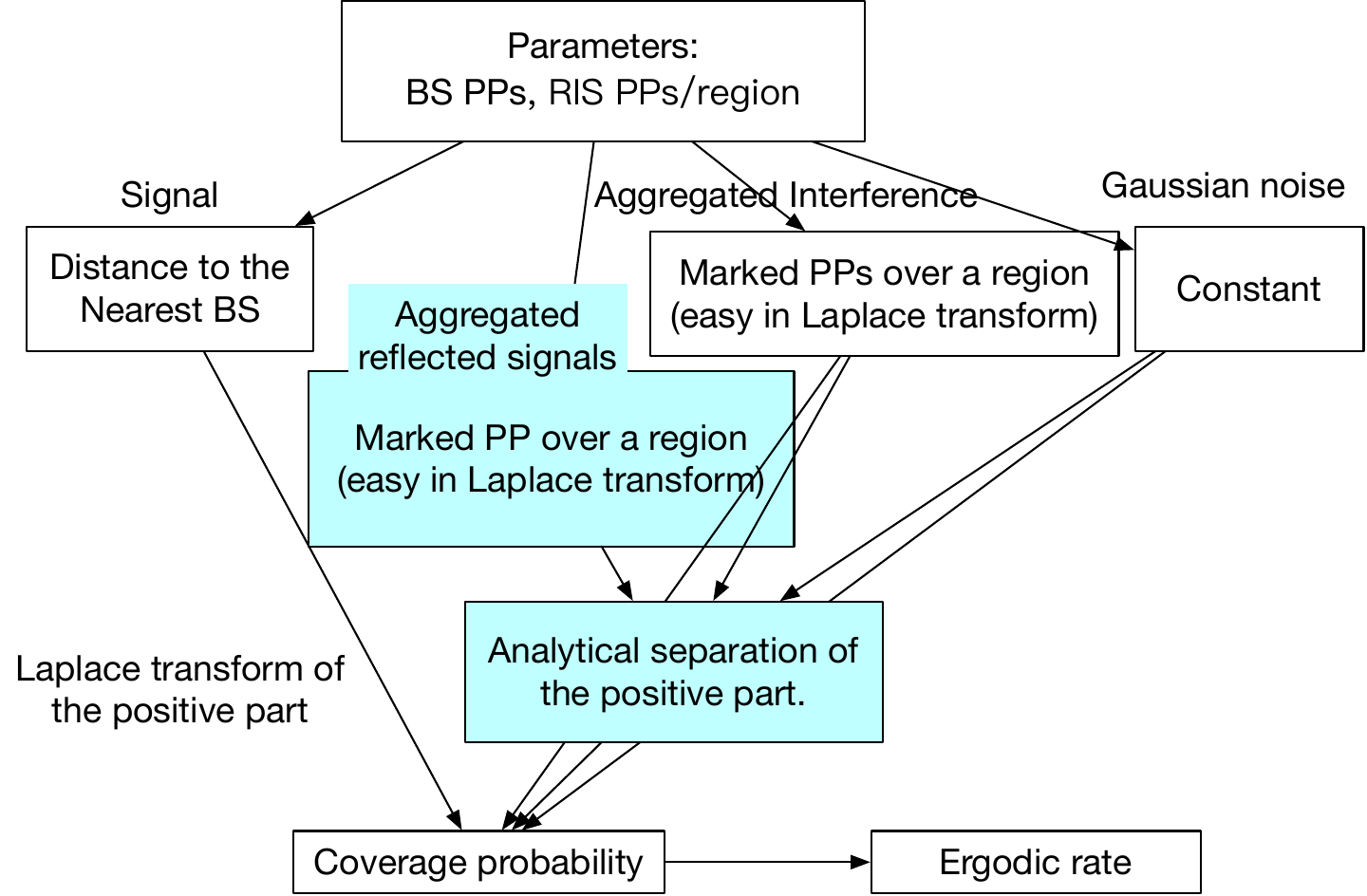}
	\caption{A stochastic geometry-based performance analysis framework for  cellular networks (uncolored), including RIS enhancement (colored).}
	\label{fig:workflow}
	\vspace{-0.5cm}
\end{figure}

\subsection{Classical stochastic geometry analysis}
The uncolored part of Fig.~\ref{fig:workflow} illustrates the workflow of cellular network performance analysis using stochastic geometry, where the locations of the BSs are modeled as a PPP~\cite{andrews2011tractable, andrews2016primer}.
This framework treats the received power of the desired signal and the aggregated interference as random variables. 
{\color{black} To enable convenient manipulation of expressions, we will consistently use $Q$ to represent the power of a given signal type in this work. 
Recall that $r$ is the BS-UE distance. We denote the desired signal power as  $Q_{S}(r) =|\rho_{D}|^2 P_0 g(r)$, and the aggregated interference power from independent interfering BSs as  $Q_I(r) = \sum_{i\neq o} |\rho_{D_i}|^2 P_0 g(\| \mathbf{x}_i\|)$.}
We have 
\vspace{-0.1cm}
\begin{equation}\label{eq:SINR_components_no_RIS}
    \mbox{SINR} = \frac{ Q_S(r)}{Q_{I}(r)+\sigma^2 }.
    \vspace{-0.1cm} 
\end{equation}
The desired signal power, $Q_{S}(r)$, has a fading power $|\rho_{D}|^2$ that follows an exponential distribution and a signal attenuation of $P_0 g(r)$.
The aggregate interference power, $Q_I(r)$, generated from independent interfering BSs whose location is a PPP defined in the area away from the associated BS at distance $r$, has a Laplace transform given by
\vspace{-0.1cm}
\begin{equation}\label{eq:interference_classical}
	\mathcal{L}_{Q_I(r)}(s) = e^{-2\pi\lambda_{\rm BS}\int_r^{\infty}x\big(1- \mathcal{L}_{v(x)}(s)\big)\text{d}x},
\vspace{-0.1cm}
\end{equation}
where {\color{black} $v(x)$ denotes the interference power from a cell whose BS at distance $x$; and $\mathcal{L}_{v(x)}(s)$ is its Laplace transform, given by $\mathcal{L}_{v(x)}(s) = \mathcal{L}_{|\rho_{D}|^2}(s) =\frac{1}{1+sP_0 g(x)}$ since the interference of a cell is solely from the BS and experience Rayleigh fading}. 
Finally, let $\sigma^2$ denote the Gaussian noise power with Laplace transform $\mathcal{L}_{ \sigma^2}(s) = e^{-s\sigma^2}$.

The coverage probability, $\mathsf{P}_{c}(T)$ , is defined as the probability that the SINR of the tagged UE exceeds a target threshold $T > 0$. 
Conditioned on the distance $r$ from the associated BS to the tagged UE, the corresponding coverage probability is $\mathsf{P}_{c}(T|r)\triangleq \mathbb{P}\left(\mbox{SINR} \geq T |r\right)$.
This can be analytically characterized by~\cite{andrews2011tractable}
\vspace{-0.1cm}
\begin{equation}\label{eq:classical_laplace}
\begin{aligned}[b]
&  \mathbb{P}\left[|\rho_{D}|^2  P_0 g(r) \geq T( Q_I(r) +\sigma^2 )\Big|r\right] \\
=&   \mathbb{P}\left[|\rho_{D}|^2\geq \frac{ T(Q_I(r) +\sigma^2)}{ P_0 g(r)}\Big| r \right] \\
\stackrel{(a)}{=} & \int_{0}^{\infty} e^{-\frac{\upsilon T}{ P_0g(r)}} f_{ Q_I(r) +\sigma^2}(\upsilon){\rm d}\upsilon \\
\stackrel{(b)}{=}&\mathcal{L}_{ Q_I(r) }\left(\frac{T}{P_0g(r)}\right)\mathcal{L}_{ \sigma^2}\left(\frac{T}{P_0g(r)}\right),
 \end{aligned}
\end{equation}
where $(a)$ results from the complementary cumulative distribution function (CCDF) of exponential distribution, defined as  $\overline{F}_{|\rho_{D}|^2}(t)=e^{-t}$; $(b)$ follows from the definition of unilateral Laplace transform. 
Here, $f_{ Q_I(r) +\sigma^2}$ denotes the probability density function (PDF) of the random variable $ Q_I(r) +\sigma^2$.

The ergodic rate can be obtained from the coverage probability $\mathsf{P}_c(T|r)$, given by~\cite{baccelli2010stochastic}
\vspace{-0.1cm}
\begin{equation}\label{eq:ergodic_rate}
\tau(r) \triangleq \mathbb{E}[\log(1+\text{SINR})|r] = \int_{0}^{\infty} \frac{\mathsf{P}_c(t|r)}{t+1}{\rm d}t.
\vspace{-0.1cm}
\end{equation}

\subsection{Laplace Transform of reflected signals}
To analyze the aggregated signal or the interference power reflected by multiple RISs within a cell, denoted by $Q_c$, we model their deployment as a marked PP where each point (RIS location) is marked by its independent reflected power $|\rho_{R}|^2M^2P_0g(d_1)g(d_2)$. 
This allows us to derive the distribution of the aggregated reflection power using the Laplace transform, given the RIS deployment region and individual link fading characteristics.
For instance, when modeling the RIS deployment as a homogeneous PPP over a region $\mathbb{X}$ of arbitrary shape,  the Laplace transform is expressed as
\begin{equation}\label{eq:interference_RIS_modification}
	\mathcal{L}_{Q_c}(s) = e^{-\lambda_{\rm RIS} \int_{\mathbb{X}} (1-\mathcal{L}_{|\rho_{R}|^2(x)}(sM^2P_0g(d_1)g(d_2))) {\rm d}x }, 
\end{equation}
where $\mathcal{L}_{|\rho_{R}|^2(x)}(s)$ is the Laplace transform of the reflected signal power from a RIS located at $x\in \mathbb{X}$. 
Alternative PPs, such as the binomial point process (for a fixed number of RISs) or the Ginibre point process (for repulsive RIS deployments)~\cite{kong2016exact}, can also be applied to the analysis described below.

Generally, the Laplace transform of the aggregated reflected signal power is expressed as an integral over the RIS deployment region.
In Section~\ref{section:special_closed_form},
 we present a special scenario that enables the derivation of closed-form expressions for this Laplace transform.

\subsection{Framework adaption for RIS deployment}

With the deployment of RISs, the desired signal, $Q_{S}$, comprises two components as given in Eq.~\eqref{eq:SINR-def}: the direct signal power $Q_{S_D}(r) = |\rho_{D}|^2 P_0 g(r)$ and the reflected signal power $Q_{S_R}(r) = \sum_{j\in \mathcal{J}_o} |\rho_{{R_{o,j}}}|^2 P_0 G(\mathbf{x}_o, \mathbf{y}_{o,j})$. 
Then, the SINR expression is
\vspace{-0.2cm}
\begin{equation}\label{eq:SINR_w_RIS}
    \mbox{SINR} = \frac{ Q_{S_D}(r)+Q_{S_R}(r)}{Q_{I}(r)+\sigma^2 }. 
\vspace{-0.1cm}
\end{equation}
To include the RIS contributed interference into $Q_{I}(r)$ in Eq.~\eqref{eq:interference_classical}, we simply modify the Laplace transform of the interference power from a cell $\mathcal{L}_{v(x)}(s)$ into
\vspace{-0.2cm}
\begin{equation}
	\mathcal{L}_{v(x)}(s) = \frac{\mathcal{L}_{Q_{I_c(x)}}(s)}{1+sP_0g(x)},
\vspace{-0.1cm}
\end{equation}
where $\mathcal{L}_{Q_{I_{c}(x)}}(s)$ represents the Laplace transform of the RIS-reflected interference from the RISs in the cell whose serving BS is at a distance $x$, which can be adapted when the RIS PP is specified. 
Since all the components in Eq.~\eqref{eq:SINR_w_RIS} and the coverage threshold are non-negative, the coverage probability $\mathsf{P}_{c}(T|r)$ in the scenario with RISs is obtained by manipulating the SINR in terms of the following expression:
\begin{equation}\label{eq:cov_expression}
\begin{aligned}
\mathbb{P}\big[Q_{S_D}(r)\geq T(Q_I(r)+\sigma^2)-Q_{S_R}(r)|r\big]. 
\end{aligned}
\end{equation}
Due to the subtraction of the reflected signal $-Q_{S_R}(r)$, the right-hand side is no longer strictly non-negative. 
The extension from a non-negative to a real-valued right-hand side is addressed within the colored portion of Fig.~\ref{fig:workflow} and presented below.
We denote
\vspace{-0.2cm}
\begin{equation}\label{eq:definition}
    \Upsilon = T\big(Q_I(r)+\sigma^2 \big)-Q_{S_R}(r).
\vspace{-0.1cm}
\end{equation} 
The bilateral Laplace transform of $\Upsilon$ is given by
\vspace{-0.1cm}
\begin{equation}\label{eq:bilater_laplace_product}
\begin{aligned}
		\mathcal{B}_{\Upsilon}(s) =&  \mathcal{B}_{Q_I(r)}(sT)\mathcal{B}_{\sigma^2}(sT)\mathcal{B}_{-Q_{S_R}(r)}(s) \\
		=&\mathcal{L}_{Q_I(r)}(sT)\mathcal{L}_{\sigma^2}(sT)\mathcal{L}_{-Q_{S_R}(r)}(s),
\end{aligned}
\vspace{-0.1cm}
\end{equation}
where the last equality holds because the the random variables $Q_I(r)$, $\sigma^2$, and $Q_{S_R}(r)$ are defined only on the non-negative real axis. Here, $\mathcal{L}_{-Q_{S_R}(r)}(s)$ will be defined in Section~\ref{section:special_closed_form}.
Given the Laplace transform $\mathcal{B}_{\Upsilon}(s)$, we have
\begin{lemma}\label{lemma:coverage}
When RISs are configured as beamformers, the coverage probability for the communication coverage threshold $T$ is given by
\vspace{-0.2cm}
\begin{equation}\label{eq:prop1}
\begin{aligned}
\mathsf{P}_c (T|r) =  \mathcal{B}_{ \Upsilon^+ }\left(\frac{1}{P_0g(r)}\right),
\end{aligned}
\vspace{-0.2cm}
\end{equation}
where $ \Upsilon^+ \triangleq \max\{0, \Upsilon\}$. 
\end{lemma}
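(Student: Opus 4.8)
The plan is to start from the conditional coverage expression in Eq.~\eqref{eq:cov_expression}, namely $\mathsf{P}_c(T|r) = \mathbb{P}\big[Q_{S_D}(r) \geq \Upsilon \mid r\big]$ with $\Upsilon$ as defined in Eq.~\eqref{eq:definition}, and to reduce it to a single expectation of an exponential functional of $\Upsilon^+$. First I would substitute $Q_{S_D}(r) = |\rho_{D}|^2 P_0 g(r)$, where the direct-link fading power $|\rho_{D}|^2$ is exponentially distributed with CCDF $\overline{F}_{|\rho_{D}|^2}(t) = e^{-t}$ for $t \geq 0$, exactly as in step $(a)$ of Eq.~\eqref{eq:classical_laplace}. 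Since $\Upsilon$ is a function of the interfering point process, the noise $\sigma^2$, and the reflected-signal power $Q_{S_R}(r)$, all independent of $|\rho_{D}|^2$, I would condition on $\Upsilon$ and evaluate the conditional probability $\mathbb{P}\big[|\rho_{D}|^2 P_0 g(r) \geq \Upsilon \mid \Upsilon\big]$.

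The second step is to split on the sign of $\Upsilon$. When $\Upsilon \geq 0$, the conditional probability equals $\mathbb{P}\big[|\rho_{D}|^2 \geq \Upsilon / (P_0 g(r))\big] = e^{-\Upsilon/(P_0 g(r))}$ by the exponential CCDF. When $\Upsilon < 0$, the event $|\rho_{D}|^2 P_0 g(r) \geq \Upsilon$ holds surely because its left-hand side is non-negative, so the conditional probability is $1$. The observation that unifies the two regimes is that $e^{-\Upsilon^+/(P_0 g(r))}$ equals $1$ precisely when $\Upsilon \leq 0$ and equals $e^{-\Upsilon/(P_0 g(r))}$ when $\Upsilon > 0$; hence in both cases the conditional probability is exactly $e^{-\Upsilon^+/(P_0 g(r))}$.

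Finally I would remove the conditioning by taking the expectation over $\Upsilon$. Because $\Upsilon^+ = \max\{0,\Upsilon\}$ is a non-negative random variable, its bilateral Laplace transform coincides with the unilateral one, so $\mathbb{E}\big[e^{-s\Upsilon^+}\big] = \mathcal{B}_{\Upsilon^+}(s)$; evaluating at $s = 1/(P_0 g(r))$ yields the claimed identity $\mathsf{P}_c(T|r) = \mathcal{B}_{\Upsilon^+}\big(1/(P_0 g(r))\big)$.

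The main subtlety—rather than a genuine obstacle—is the sign issue introduced by the subtracted reflected term $-Q_{S_R}(r)$ in Eq.~\eqref{eq:definition}, which renders the right-hand side of the coverage inequality potentially negative and thereby blocks a direct reuse of the classical exponential-CCDF argument of Eq.~\eqref{eq:classical_laplace}. The positive-part truncation $\Upsilon^+$ is exactly the device that absorbs this, since the exponential CCDF saturates at $1$ for non-positive arguments. I would take care to justify the independence invoked when conditioning on $\Upsilon$, and to emphasize that the product factorization of the \emph{bilateral} transform $\mathcal{B}_{\Upsilon}(s)$ in Eq.~\eqref{eq:bilater_laplace_product} is a separate downstream computation—used later to recover $\mathcal{B}_{\Upsilon^+}$ from $\mathcal{B}_{\Upsilon}$—and is not required to establish the identity stated in the lemma itself.
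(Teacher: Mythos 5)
Your proposal is correct and follows essentially the same route as the paper's proof: both condition on $\Upsilon$, apply the exponential CCDF of $|\rho_D|^2$ on the event $\{\Upsilon \geq 0\}$, observe that coverage holds with probability $1$ on $\{\Upsilon < 0\}$, and identify the resulting expectation $\mathbb{E}\big[e^{-\Upsilon^+/(P_0 g(r))}\big]$ with $\mathcal{B}_{\Upsilon^+}\big(1/(P_0 g(r))\big)$. The paper writes this as a split integral against the density $f_\Upsilon$ while you phrase it as a conditional expectation unified by the saturation of the exponential CCDF, but these are the same argument; your explicit remarks on independence and on the downstream role of Eq.~\eqref{eq:bilater_laplace_product} are sound refinements rather than deviations.
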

\begin{proof}
From Eq.~\eqref{eq:cov_expression} and the definition $Q_{S_D} = |\rho_{D}|^2  P_0 g(r)$, the coverage probability $\mathsf{P}_c (T|r)$ is given by
\vspace{-0.2cm}
\begin{equation}\label{eq:positive}
	\begin{aligned}[b]
&  \mathbb{P}\left[|\rho_{D}|^2  P_0 g(r) \geq \Upsilon\Big|r\right]
=   \mathbb{P}\left[|\rho_{D}|^2\geq \frac{\Upsilon}{ P_0 g(r)}\Big| r \right] \\
\stackrel{(a)}{=} & \int_{0}^{\infty} e^{-\frac{\upsilon}{ P_0g(r)}} f_{\Upsilon}(\upsilon){\rm d}\upsilon + \int_{-\infty}^{0}f_{\Upsilon}(\upsilon){\rm d}\upsilon \\
 \stackrel{(b)}{=}&  \mathcal{B}_{ \Upsilon^+ }\left(\frac{1}{P_0g(r)}\right),
\end{aligned}
\vspace{-0.2cm}
\end{equation}
where $(a)$: the exponential term in the first integral is derived from the CCDF of the exponential distribution. For this integral, the variable $\Upsilon$ is integrated over the positive part for the first integral since the signal fading power $|\rho_D|^2$ is defined on the $\mathbb{R}^+$.  
$(b)$ follows from the definition of the bilateral Laplace transform, where $\mathcal{B}_{\Upsilon^+}(s) = \int_{0}^{\infty}e^{-s\upsilon}f_{\Upsilon}(\upsilon){\rm d}\upsilon + \int_{-\infty}^{0}e^{-s0}f_{\Upsilon}(\upsilon){\rm d}\upsilon$ (since $e^{-s0}=1$ for the second integral), with the argument $s=$ set to $\frac{1}{P_0g(r)}$ .
\vspace{-0.2cm}
\end{proof}
The difference in coverage probability expressions between the RIS-assisted scenario and BS-only scenario lies in the negative component of $\Upsilon$, which is introduced by the RIS reflection signals $-Q_{S_R}$. 
Specifically, when $\upsilon$ takes a negative value (as seen in the second integral of the expression $(a)$ of Eq.~\eqref{eq:positive}), the communication coverage is guaranteed by the reflected link, resulting in a coverage probability of 1 regardless of direct link fading. 
This contrasts with the coverage probability in Eq.~\eqref{eq:classical_laplace}, where $\Upsilon$ only takes positive values.

To connect $\mathcal{B}_{\Upsilon}(s)$ defined in Eq.~\eqref{eq:bilater_laplace_product} with the coverage probability of RIS-assisted networks $\mathcal{B}_{\Upsilon^+}(s)$, we have:
\begin{theorem}
The Laplace transform of the positive part of a random variable can be derived from its bilateral Laplace transform via the formula 
\vspace{-0.1cm}
\begin{equation}\label{eq:theorem}
\begin{aligned}
& \mathcal{B}_{\Upsilon^+}(s)= \\ 
& \frac{1}{2\pi \imath}\int_{-\infty}^{\infty} \Big(\mathcal{B}_{\Upsilon}(s-\imath u)- \mathcal{B}_{\Upsilon}(-\imath u) \Big)\frac{{\rm d} u}{u} + \frac{1}{2}\Big( 1+ \mathcal{B}_{\Upsilon}(s)\Big),
\end{aligned}
\end{equation}
where $\int_{-\infty}^{\infty}\frac{{\rm d}u}{u}$ is understood in the sense of Cauchy principal-value, that is $ \int_{-\infty}^{\infty}= \lim_{\epsilon \downarrow 0^+}\int_{-\infty}^{-\epsilon}+  \int_{\epsilon}^{\infty}$.
\end{theorem}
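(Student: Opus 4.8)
The plan is to read off the target quantity directly from its definition in the proof of Lemma~\ref{lemma:coverage}, namely
\begin{equation*}
\mathcal{B}_{\Upsilon^+}(s)=\int_{0}^{\infty}e^{-s\upsilon}f_{\Upsilon}(\upsilon)\,{\rm d}\upsilon+\int_{-\infty}^{0}f_{\Upsilon}(\upsilon)\,{\rm d}\upsilon
=\mathbb{E}\big[e^{-s\Upsilon}\mathbbm{1}_{\{\Upsilon>0\}}\big]+\mathbb{P}(\Upsilon\le 0),
\end{equation*}
where I use that $\Upsilon^+=\Upsilon$ on $\{\Upsilon>0\}$ and $\Upsilon^+=0$ otherwise, and I assume $\Upsilon$ has no atom at the origin so the boundary is negligible (this holds since $\Upsilon$ inherits a density from $Q_I(r)$). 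The whole task then reduces to expressing the truncated expectation $\mathbb{E}[e^{-s\Upsilon}\mathbbm{1}_{\{\Upsilon>0\}}]$ and the tail probability $\mathbb{P}(\Upsilon\le 0)$ through values of $\mathcal{B}_{\Upsilon}$ on the imaginary axis and on the vertical line $\Re=s$.

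The key tool I would invoke is the Cauchy-principal-value representation of the Heaviside indicator. Starting from the Dirichlet integral, the odd part in $u$ yields $\mathrm{p.v.}\!\int_{-\infty}^{\infty}\frac{e^{\imath u x}}{u}\,{\rm d}u=\imath\pi\,\mathrm{sgn}(x)$, hence
\begin{equation*}
\mathbbm{1}_{\{x>0\}}=\frac12+\frac{1}{2\pi\imath}\,\mathrm{p.v.}\!\int_{-\infty}^{\infty}\frac{e^{\imath u x}}{u}\,{\rm d}u .
\end{equation*}
Substituting $x=\Upsilon$, multiplying by $e^{-s\Upsilon}$, taking expectations, and interchanging expectation with the principal-value integral, I would recognize $\mathbb{E}[e^{-(s-\imath u)\Upsilon}]=\mathcal{B}_{\Upsilon}(s-\imath u)$ to obtain $\mathbb{E}[e^{-s\Upsilon}\mathbbm{1}_{\{\Upsilon>0\}}]=\tfrac12\mathcal{B}_{\Upsilon}(s)+\tfrac{1}{2\pi\imath}\,\mathrm{p.v.}\!\int\frac{\mathcal{B}_{\Upsilon}(s-\imath u)}{u}{\rm d}u$. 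Taking $s=0$ and using $\mathbb{P}(\Upsilon\le 0)=1-\mathbb{E}[\mathbbm{1}_{\{\Upsilon>0\}}]$ with $\mathbb{E}[e^{\imath u\Upsilon}]=\mathcal{B}_{\Upsilon}(-\imath u)$ gives $\mathbb{P}(\Upsilon\le 0)=\tfrac12-\tfrac{1}{2\pi\imath}\,\mathrm{p.v.}\!\int\frac{\mathcal{B}_{\Upsilon}(-\imath u)}{u}{\rm d}u$.

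Adding the two contributions and merging the two principal-value integrals—legitimate because each converges on its own—collapses the $\tfrac12$ terms into $\tfrac12(1+\mathcal{B}_{\Upsilon}(s))$ and leaves the single integral of $\frac{\mathcal{B}_{\Upsilon}(s-\imath u)-\mathcal{B}_{\Upsilon}(-\imath u)}{u}$, which is precisely Eq.~\eqref{eq:theorem}.

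The main obstacle is the rigorous justification of the interchange of expectation and the principal-value integral together with the convergence of that integral. Near $u=0$ the integrand behaves like a constant over $u$, whose symmetric limit vanishes, so the principal value is well defined there; the merged form leaves a residual $(\mathcal{B}_{\Upsilon}(s)-1)/u$ behavior at the origin that is integrable only in the principal-value sense, consistent with the statement. At infinity the factor $1/u$ gives only conditional, oscillatory convergence, since the characteristic function $\mathcal{B}_{\Upsilon}(-\imath u)$ is merely bounded and need not decay. I would make this precise by truncating to $u\in[\epsilon,1/\epsilon]$ (and its reflection), applying Fubini on the truncation where the integrand is bounded on a finite interval, and then passing to the limit $\epsilon\downarrow 0^+$ using the uniform bound $|\mathcal{B}_{\Upsilon}(-\imath u)|\le 1$ for the small-$u$ part and a second-mean-value/Dirichlet-type estimate for the tails.
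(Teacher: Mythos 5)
Your derivation is correct, but note that the paper itself gives no in-text proof of this theorem: it simply defers to Appendix~D of \cite{sun2023performance}, so there is no argument in the paper to compare yours against line by line. What you have written is a self-contained Gil--Pelaez-style proof: you represent the indicator $\mathbbm{1}_{\{x>0\}}$ through the principal-value sign kernel $\mathrm{p.v.}\!\int e^{\imath u x}\,{\rm d}u/u = \imath\pi\,\mathrm{sgn}(x)$, apply it once with the exponential weight $e^{-s\Upsilon}$ to get the truncated transform $\mathbb{E}[e^{-s\Upsilon}\mathbbm{1}_{\{\Upsilon>0\}}]$ and once at $s=0$ to get $\mathbb{P}(\Upsilon\le 0)$, then add. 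The algebra checks out exactly: the two $\tfrac12$ contributions produce $\tfrac12(1+\mathcal{B}_{\Upsilon}(s))$, and the two kernel integrals merge into the single integrand $\big(\mathcal{B}_{\Upsilon}(s-\imath u)-\mathcal{B}_{\Upsilon}(-\imath u)\big)/u$, which is precisely Eq.~\eqref{eq:theorem}; the $\tfrac12$ terms are the fingerprint of the sign-kernel route, so the cited appendix almost certainly proceeds the same way. Your decomposition $\mathcal{B}_{\Upsilon^+}(s)=\mathbb{E}[e^{-s\Upsilon}\mathbbm{1}_{\{\Upsilon>0\}}]+\mathbb{P}(\Upsilon\le 0)$ also agrees with the integral splitting used in the paper's proof of Lemma~\ref{lemma:coverage}, and your no-atom-at-zero assumption is legitimately grounded in $\Upsilon$ inheriting a density from $Q_I(r)$. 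The one place your write-up remains a sketch rather than a proof is where you say the interchange and tail convergence ``would'' be handled by truncation to $[\epsilon,1/\epsilon]$, Fubini, and Dirichlet-type estimates: that is indeed the standard and correct plan (the same one needed for the classical Gil--Pelaez inversion), but to be fully rigorous you would need to spell out the uniform bound on partial Dirichlet integrals that lets you dominate the tail remainder, since $\mathcal{B}_{\Upsilon}(-\imath u)$ decays only by Riemann--Lebesgue and the integral at infinity converges merely conditionally. That gap is one of rigor, not of ideas; the proof strategy itself is sound and complete.
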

\begin{proof}
    Please refer to Appendix~D in~\cite{sun2023performance}. 
\end{proof}

Applying this framework necessitates several layered numerical integrations: integrating over the interference region, integrating over the RIS deployment region, and computing the Laplace transform for the positive component of the distribution.
To complement the analytical capabilities of this framework, we introduce a numerical solver that enables practical application by accommodating arbitrary deployment areas and link characteristics\footnote{https://github.com/Sun8Guden/RISassistedCellularFramework.git}.

\section{Special Case with Closed-Form Expressions}\label{section:special_closed_form}

\begin{figure}
\centering
	\includegraphics[width=\linewidth]{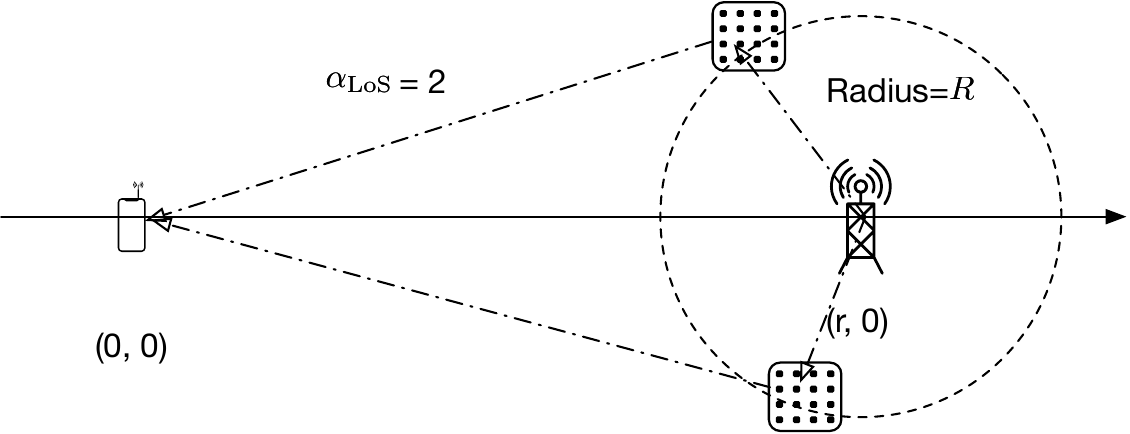}
	\caption{The RISs are located at a constant distance $R$ from the BS, with their angular distribution being random along a circle.}
\label{fig:special_scenario}
\vspace{-0.5cm}
\end{figure}

RISs offer optimal performance enhancement when placed in proximity to either BSs or UEs, a point that will be validated in Section~\ref{subsec:proximity}. 
In this section, we focus on the scenario where RISs are placed near BSs.
Despite random RIS placements within cells, constrained by proximity guidelines, spatial averaging across multiple cells yields a circular deployment pattern around BSs.
This pattern is characterized by an average BS-RIS distance of $R$, as shown in Fig.~\ref{fig:special_scenario}.
We assume that multiple RISs are placed along the dashed circle, with their spatial distributions modeled by PPPs for a random number of RISs, or BPPs for a fixed number of RISs.
We assume that the RIS-reflected links  experience either Rayleigh or Nakagami-$m$ fading, and the path loss exponent between RIS and UE, $\alpha_{\rm LoS}$, as 2.
This modeling allows us to characterize the Laplace transform of the RIS-reflected signal power with a closed-form expression.

\subsection{PPP and Rayleigh modeling}

We first derive a closed-form expression for the scenario where the location of RISs is modelled as a PPP with density $\lambda_{\rm RIS}$, and the reflected links experience Rayleigh fading. 
We have
\vspace{-0.2cm}
\begin{equation}\label{eq:laplace_reflect_PPP}
\begin{aligned}
	\mathcal{L}_{-Q_R}(s)&=
		 e^{-\lambda_{\rm RIS} \!\int_{0}^{2\pi}\!R\big(1-\mathcal{L}_{|\rho_R|^2}(-sM^2P_0G(r, R, \psi)) \big){\rm d}\psi } \\
	&\stackrel{(a)}{=} e^{-\lambda_{\rm RIS} R\int_{0}^{2\pi}\big(1-\frac{1}{1 - sM^2 \beta^2P_0 R^{-2}|r+Re^{-\imath \psi}|^{-2}} \big){\rm d}\psi }, \\
\end{aligned}
\end{equation}
where $(a)$ follows from $G(r, R, \psi) = \beta^2 R^{-2}|r+Re^{-\imath \psi}|^{-2}$ and $\mathcal{L}_{|\rho_R|^2}(s) = \frac{1}{1+s}$ for Rayleigh fading.
By substituting $s=\frac{1}{P_0g(r)}$ from Eq.~\eqref{eq:prop1} and $g(r)=\beta r^{-\alpha_{\rm NLoS}}$ into this integral, we have 
\vspace{-0.2cm} 
\begin{equation}\label{eq:reducible_integral}
\begin{aligned}[b]
	&\int_{0}^{2\pi}\bigg(1-\frac{1}{1 - s M^2\beta^2P_0 R^{-2}|r+Re^{-\imath \psi}|^{-2}} \bigg){\rm d}\psi  \\
	=&\int_{0}^{2\pi} \bigg(1 - \frac{1}{1-\frac{M^2\beta r^{\alpha_{\rm NLoS}}|r+Re^{-\imath \psi}|^{-2})}{R^2}}  \bigg) {\rm d}\psi \\
%	=& \int_{0}^{2\pi} \bigg(1 - \frac{|r+Re^{-\imath \psi}|^{2}}{|r+Re^{-\imath \psi}|^{2}-\frac{M^2\beta r^{\alpha_{\rm NLoS}}}{R^2}}  \bigg) {\rm d}\psi \\
	=& \int_{0}^{2\pi} \bigg( - \frac{\frac{M^2\beta r^{\alpha_{\rm NLoS}}}{R^2}}{|r+Re^{-\imath \psi}|^{2}-\frac{M^2\beta r^{\alpha_{\rm NLoS}}}{R^2}}  \bigg) {\rm d}\psi \\
	=&\int_{0}^{2\pi} \bigg( - \frac{\frac{M^2\beta r^{\alpha_{\rm NLoS}}}{R^4}}{| \frac{r}{R}+e^{-\imath \psi}|^2 - \frac{M^2\beta r^{\alpha_{\rm NLoS}}}{R^4}}\bigg) {\rm d}\psi.
\end{aligned}
\vspace{-0.1cm}
\end{equation}
Observe that Eq~\eqref{eq:reducible_integral} can be expressed in closed form. 
Let $\frac{r}{R} = a$, $\frac{M^2\beta r^{\alpha_{\rm NLoS}}}{R^4} = b$, we get
\vspace{-0.1cm}
\begin{equation}\label{eq:elimination}
\begin{aligned}[b]
	&\int_0^{2\pi} \bigg(-\frac{b}{-b+|a+e^{-\imath \psi}|^2}\bigg){\rm d}\psi \\
	=&\int_0^{2\pi} \bigg(-\frac{b}{-b+\big(a+\cos(\psi)\big)^2+\big(\sin(\psi)\big)^2}\bigg){\rm d}\psi \\
	=& -\frac{2\pi b}{\sqrt{a^4 + ( b - 1)^2 - 2 a^2 (1 + b)}}.
\end{aligned}
\vspace{-0.1cm}
\end{equation}
Plugging Eq.~\eqref{eq:elimination} into Eq.~\eqref{eq:laplace_reflect_PPP}, we have
\vspace{-0.2cm}
\begin{equation}\label{eq:closed_form_reflected_signal}
\begin{aligned}
	\mathcal{L}_{-Q_R}&\Big(\frac{1}{P_0g(r)}\Big)=
	\\
	&e^{\frac{ 2\pi \lambda_{\rm RIS} M^2 \beta r^{\alpha_{\rm NLoS}}/R^3}{\sqrt{(\frac{r}{R})^4 + (\frac{M^2 \beta r^{\alpha_{\rm NLoS}}}{R^4}-1)^2 - 2 (\frac{r}{R})^2 (1 + \frac{M^2 \beta r^{\alpha_{\rm NLoS}}}{R^4})}}}.
\end{aligned}
\vspace{-0.1cm}
\end{equation}
This closed-form Laplace transform can be directly incorporated into the general framework for performance metric derivation.

In our system model, RISs in other cells beamform and reflect signals from their associated BSs. 
These reflected beams may overlap the tagged UE with probability $\xi$.
Consequently,  the RIS reflection interference in each interfering cell can be modeled as a PPP thinned by $\xi$. Therefore, the Laplace transform of the aggregated interference from RISs, $\mathcal{L}_{Q_{I_c}(x)}(s)$, as defined in Eq.~\eqref{eq:interference_RIS_modification}, is given by
\vspace{-0.2cm}
\begin{equation}
\begin{aligned}[b]
	\mathcal{L}_{Q_{I_c}(x)}(s) = e^{-\lambda_{\rm RIS} \xi \int_{0}^{2\pi} R \big(1-\frac{1}{1+sP_0M^2\beta^2R^{-2}|x+Re^{-\imath \psi}|^{-\alpha_{IR}} }\big) {\rm d}\psi  },
\end{aligned}
\vspace{-0.1cm}
\end{equation}
where the path loss exponent for the interfering RIS-UE link $\alpha_{IR}$ may not be equal to 2, and the Laplace transform may not be reduced to closed form as in Eq~\eqref{eq:closed_form_reflected_signal}, thus requiring numerical integration. 

\subsection{Extension to BPP modeling and Nakagami-$m$ fading}

The closed-form Laplace transform for the reflected signal power, derived for PPP-distributed RISs and Rayleigh fading, represents the fundamental characteristics of this derivation. Extensions to a BPP model and Nakagami-$m$ fading are possible and discussed in this subsection.

For BPP-modeled RISs with $N$ RISs, we have the Laplace transform for the power of the reflected signals
\vspace{-0.1cm}
\begin{equation}\label{eq:LaplaceReflect}
\begin{aligned}[b]
	&\mathcal{L}_{-Q_R}(s)  \\
	&= \bigg(\int_{0}^{2\pi} \mathcal{L}_{|\rho_{R}|^2}\!\Big(-s M^2 P_0 G(r, R, \psi)\Big) \frac{1}{2\pi} {\rm d}\psi \bigg)^N  \\
	&= \bigg(\!\frac{1}{2\pi} \!\!\int_{0}^{2\pi} \frac{1}{1 - s M^2 \beta^2P_0 R^{-2}|r+Re^{-\imath \psi}|^{-2}} {\rm d}\psi\bigg)^N  \\
	& = \!\bigg(\!\frac{1}{2\pi}\!\!\int_{0}^{2\pi}\! \!\bigg(\!1 \!+\!  \frac{\frac{M^2 \beta r^{\alpha_{\rm NLoS}}}{R^4}}{| \frac{r}{R}\!+\!e^{-\imath \psi}|^2\! -\! \frac{M^2 \beta r^{\alpha_{\rm NLoS}}}{R^4}}\!\bigg)\! {\rm d}\psi\!\bigg)^N \! \!\stackrel{(a)}{=}\!\Bigg(\! 1+\\
	&  \frac{ \frac{M^2 \beta r^{\alpha_{\rm NLoS}}}{R^4}}{\sqrt{\!\big(\frac{r}{R}\big)^4\! +\! \big( \frac{M^2 \beta r^{\alpha_{\rm NLoS}}}{R^4}\!-\!1\big)^2\! -\! 2 \big(\frac{r}{R}\big)^2 \big(1\! +\! \frac{M^2 \beta r^{\alpha_{\rm NLoS}}}{R^4}\big)}}\! \Bigg)^N,
\end{aligned}
\end{equation}
where $(a)$ follows from the same steps as in Eq.~\eqref{eq:elimination}.

For the RIS-UE link experiencing Nakagami-$m$ fading, {\color{black}where the fading power follows a Gamma distribution with shape $m$ and scale $\frac{1}{m}$, the Laplace transform of the fading power $\mathcal{L}_{\rho_{R}^2}(s)=\big(\frac{1}{1+s/m}\big)^m$.
By substituting the Laplace transform for Rayleigh fading in Eq~\eqref{eq:elimination} with that of Nakagami-$m$ fading, the Laplace transform $\mathcal{L}_{-Q_R}(s)$ is expressed as}
\begin{equation}
\begin{aligned}[b]
	&\mathcal{L}_{-Q_R}(s) = \\ &\bigg(\frac{1}{2\pi} \int_{0}^{2\pi}  \bigg(\frac{1}{1 - sM^2/m \beta^2P_0 R^{-2}|r+Re^{-\imath \psi}|^{-2}}\bigg)^m{\rm d}\psi\bigg)^N.
\end{aligned}
\end{equation}
This expression can be reduced to a form similar to  Eq.~\eqref{eq:elimination} and subsequently transformed into closed form. 
For example, for $m=2$,  the mapping, analogous to Eq.~\eqref{eq:elimination} is given by
\begin{equation}
	\begin{aligned}[b]
		&\frac{1}{2\pi}\int_0^{2\pi} \bigg(1+\frac{b}{-b+|a+e^{-\imath \psi}|^2}\bigg)^2{\rm d}\psi \\
		=& 1+\frac{2 b(1-a^2)^2 - 3 b^2(1+a^2)  + b^3}{\left(a^4-2 a^2 (b+1)+(b-1)^2\right)^{3/2}}.
	\end{aligned}
	\end{equation}
Consequently, we have the Laplace transform for the reflected links experiencing Nakagami-2 fading
\begin{equation}
\begin{aligned}[b]
			&\mathcal{L}_{-Q_R}\Big(\frac{1}{P_0g(r)}\Big)  = \Bigg(1+\\
			&\frac{\!\frac{M\!^2\!\beta r^{\alpha_{\rm NLoS}}}{R^4}(\!1\!-\!\big(\!\frac{r}{R}\!\big)\!^2\!)^2\! -\! 3\!\big(\!\frac{M\!^2\!\beta r^{\alpha_{\rm NLoS}}}{2R^4}\!\big)\!^2\!(1\!+\!\big(\!\frac{r}{R}\!\big)\!^2) \! +\! \big(\!\frac{M\!^2\!\beta r^{\alpha_{\rm NLoS}}}{2R^4}\!\big)\!^3}{\!\big(\!\big(\!\frac{r}{R}\big)^4 \!+ \!\big( \frac{M^2\beta r^{\alpha_{\rm NLoS}}}{2R^4}\!-\!1\big)^2\! -\! 2 \big(\frac{r}{R}\big)^2 \big(1 + \frac{M^2\beta r^{\alpha_{\rm NLoS}}}{2R^4}\!\big) \!\big)^{1.5}} \!\!\Bigg)^N\!\!.
\end{aligned}
\end{equation}
It is observed that this leads to a pattern of closed-form expressions for integer $m$, potentially provable by induction, which will be explored in future research.

\section{Numerical Analysis}\label{sec:simulation}

In this section, we first provide a cell-level validation for the proximity guideline of RIS placement, which establishes our circle-deployment stochastic geometry model. 
Using this model, we then analyze the impact of RIS-induced reflected interference, different channel fading assumptions, and alternative RIS placement models.
Numerical simulations are performed using both analytical expression evaluation (labeled ``analytical") and Monte Carlo methods (labeled ``simulation").
Finally, we compare their computational efficiency.

The simulations are configured with an average of 5 RIS panels per cell. 
The BS density is set to $10$ per ${\rm km}^2$, resulting in an approximate distance of $150$ meters between a randomly located UE to its nearest BS.
We evaluate the performance of a tagged UE positioned at a fixed distance of $r=100$ meters to its associated BS.
For the environment, we assume a Gaussian noise level of $\sigma^2=10^{-13}$ Watt.
The path loss exponents are set to $\alpha_{\rm LoS}=2$ and $\alpha_{\rm NLoS}=3$. 
Additional simulation parameters are specified for specific evaluations.

\subsection{Proximity deployment guideline}\label{subsec:proximity}

\vspace{-0,3cm}
\begin{figure}[htp!]
	\centering
	\includegraphics[width=1.0\linewidth]{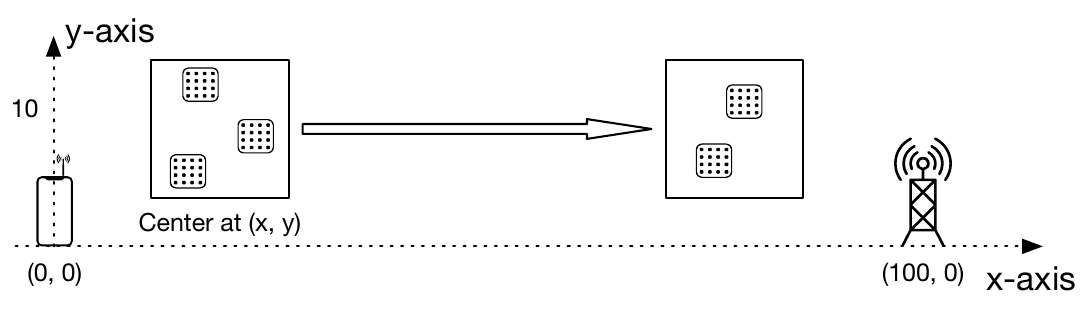}
	\caption{\color{black}The square area where multiple RISs are randomly deployed is varied in proximity from the UE to the BS.}
	\label{fig:proximity}
	\vspace{-0.3cm}
\end{figure} 
The proposed circle-deployment RIS model is motivated by the observation that RIS placement near the BS or UE minimizes reflected path loss, thereby increasing the power of reflected signals.
While previous studies have examined this deployment guideline in deterministic or one-dimensional settings, we investigate a stochastic deployment of multiple RISs within a two-dimensional square region.
As shown in Fig~\ref{fig:proximity}, we define the UE at $(0, 0)$ and the BS at $(100, 0)$, with the x-axis connecting them and the y-axis perpendicular, and set the square's width to 10 meters.
We vary the square's center along the x-axis, keeping the y-coordinates at 0 or 10 as labeled in the figure, transitioning from the UE's proximity to the BS's proximity, as illustrated in Fig.~\ref{fig:near_fig}.
RISs within the square are modeled as either a BPP with 5 RISs per square, or a PPP with an average of 5 RISs per square. 
Each RIS comprises $M=80$ elements for beamforming.
Interfering BSs outside the serving cell are modeled as the PPP, and no RISs are associated with these BSs for this simulation.

\begin{figure}
	\centering
	\includegraphics[width=0.9\linewidth]{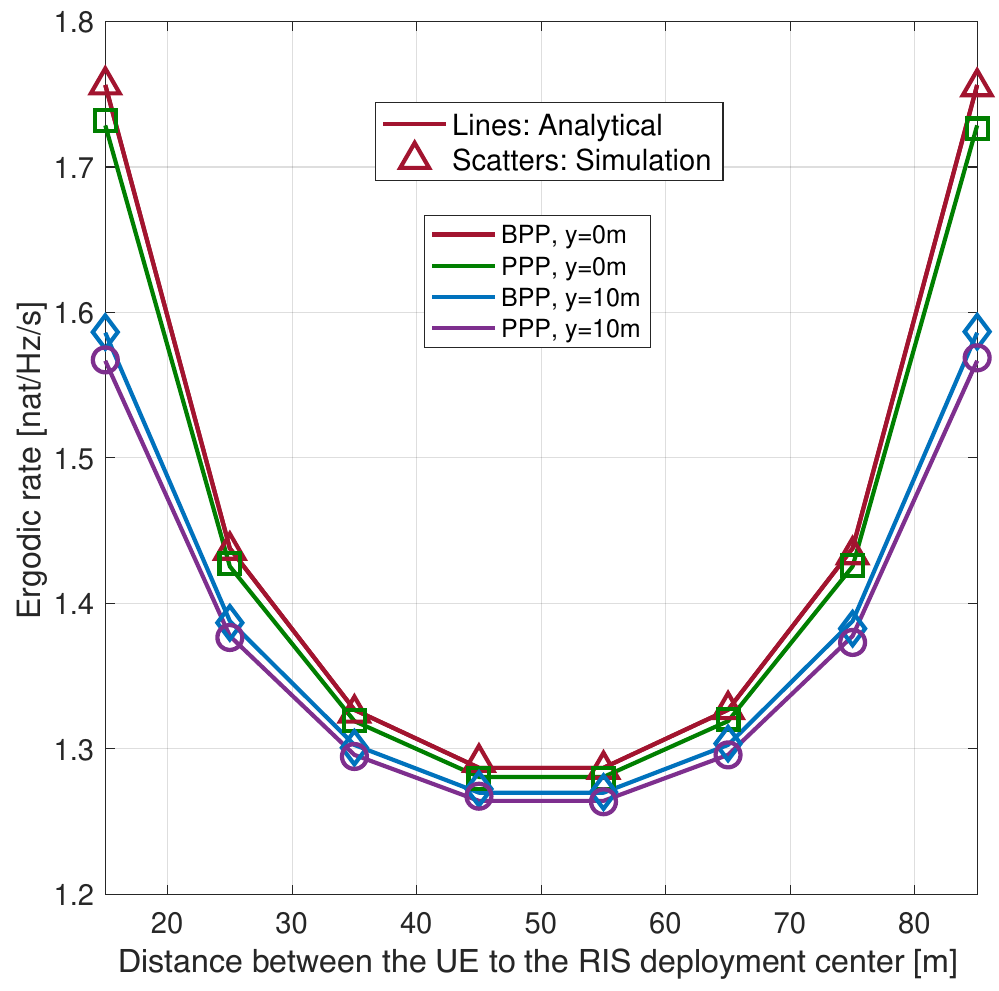}
	\caption{\color{black}The relationship between the RIS-enhanced ergodic rate and the distance of the RIS deployment area to the BS.}
	\label{fig:near_fig}
	\vspace{-0.5cm}
\end{figure}

As shown in Fig.~\ref{fig:near_fig}, the ergodic rate improvement provided by RISs is maximized when the square region's center is close to either the BS or the UE.
Specifically, at x-coordinates of approximately 15 (RISs near the UE) and 85 (RISs near the BS), the rate enhancement for $y=0$ is roughly $50\%$ higher compared to the scenario where the square is positioned in the middle between the BS and UE. 
Furthermore, when we analyze the y-coordinate, the set of curves with y-coordinate 0 is above that with y-coordinate 10, further validating the proximity deployment guideline.
This observation applies to both the BPP and PPP models.
We can conclude that, in a two-dimensional stochastic deployment, RISs should be placed in proximity to either the BS or the UE to effectively reflect the received signal.

\subsection{System level RIS reflected interference}
We now focus on the circle-shaped RIS deployment model, with RISs deployed circularly around each BS, for which we configure the radius of the circle $R=20$m. 
Recall that all RISs beamform and reflect signals from their associated BSs, and that reflected beams may overlap the tagged UE. 
We thus conduct a system-level analysis of RIS reflection interference. 
We assume that the RIS-reflected beams experience Rayleigh fading.
In Fig.~\ref{fig:interference}, we investigate the reflection interference by studying two key factors: the cross-cell path loss exponent, labeled by $\alpha_{IR}$, which can refer either LoS $\alpha_{\rm LoS}=2$ or NLoS $\alpha_{\rm NLoS}=3$, and the overlapping probability, $\xi$, of interfering beam. 
Specifically,  $\xi$ can be fixed to a constant (0.1 or 0.3) or inversely proportional to the number of RIS elements ($1/M$ or $5/M$), depending on the RIS beamforming model.

\begin{figure}
\centering
\includegraphics[width=0.9\linewidth]{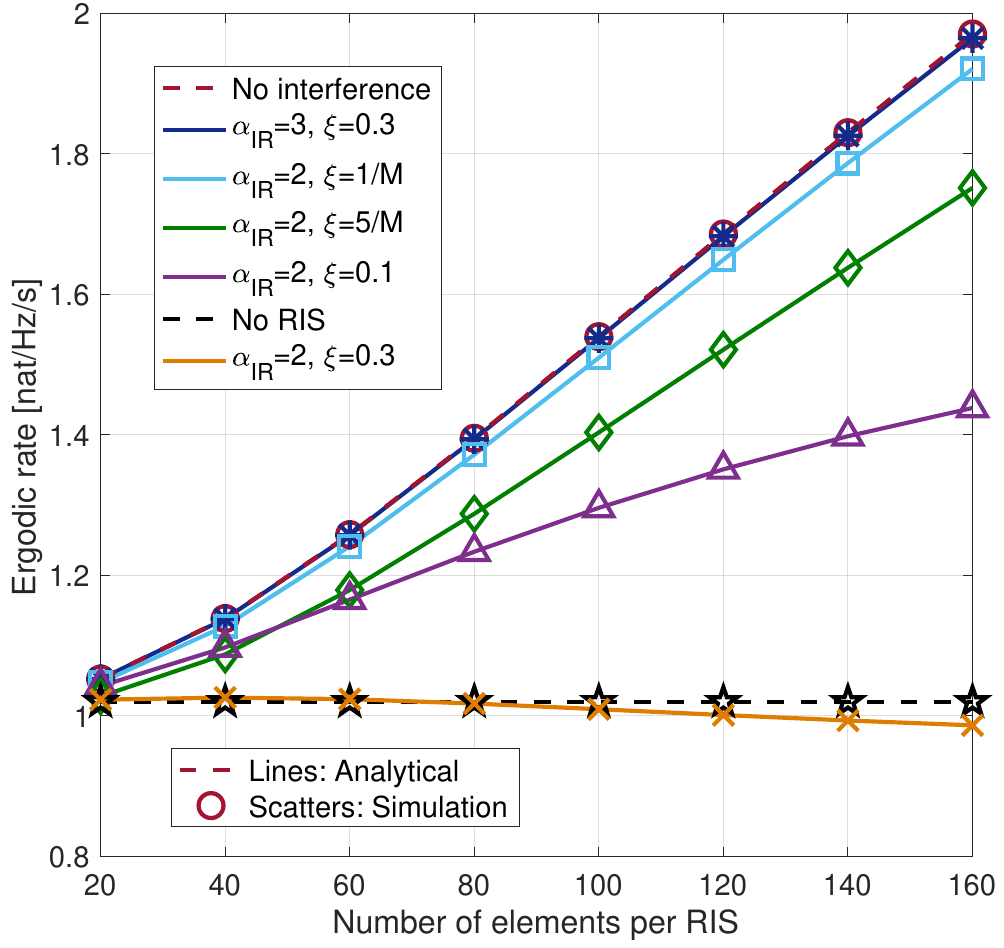}
\caption{\color{black}Analysis of the impact of RIS-generated inter-cell interference on the ergodic rate. }
	\label{fig:interference}
	\vspace{-0.5cm}
\end{figure}

Fig.~\ref{fig:interference} shows the ergodic rate of the tagged UE as a function of the number of RIS elements.
Specifically, we present two reference curves: the black dashed line with pentagram markers represents the scenario without RIS deployment, and the red dashed line with circle marks represents the scenario without RIS reflection interference.
Since cross-cell signal propagation usually occurs through an NLoS channel ($\alpha_{IR}=3$), the resulting ergodic rate (the blue curve with star marks) aligns with the reference curve that excludes RIS reflection interference. 
This suggests that RIS reflection interference has a minimal effect under high cross-cell propagation loss.
When the RIS reflection interference experiences LoS channel ($\alpha_{IR}=2$),  the reflected interference becomes significant.
For overlapping probabilities of $\xi=1/M$ or $\xi=5/M$, as shown by the curves with square and diamond markers, a noticeable ergodic rate penalty is observed. 
The impact of reflection interference is more significant for fixed overlapping probabilities ($\xi=0.1$ and $\xi=0.3$) when the number of elements per RIS is large, as shown by the curves with triangle and cross marks.
In particular for $\xi=0.3$, with a limited number of RIS elements, the deployment provides only a marginal ergodic rate gain compared to the no-RIS baseline. 
As the number of elements per RIS increases, performance degrades below this baseline.
However, in practical scenarios, strong cross-channel conditions are uncommon, thus such extreme performance degradation are unlikely. 
Moreover, with large RIS deployments, the reflected beam is typically narrow, resulting in a low overlapping probability. 
Therefore, we can reasonably assume reflection interference to be negligible in the following analysis.

\subsection{Investigation of fading models }
This work's closed-form derivation relies on the simplifying assumption of Rayleigh fading for the reflected link. 
However, accurately modeling the RIS-reflected beam's variation in practice is complex for its dependence on beamforming.
To quantify the potential uncertainty introduced by fading models, we evaluate the ergodic rate of the tagged UE under Nakagami-$m$ fading distributions, ensuring a constant reflected signal power.  
Furthermore, utilizing a large number of RIS elements for beamforming can induce channel hardening, thereby mitigating fading effects. 
Therefore, we also examine the scenario with a constant power reflected beam.

\begin{figure}
	\centering
	\includegraphics[width=0.9\linewidth]{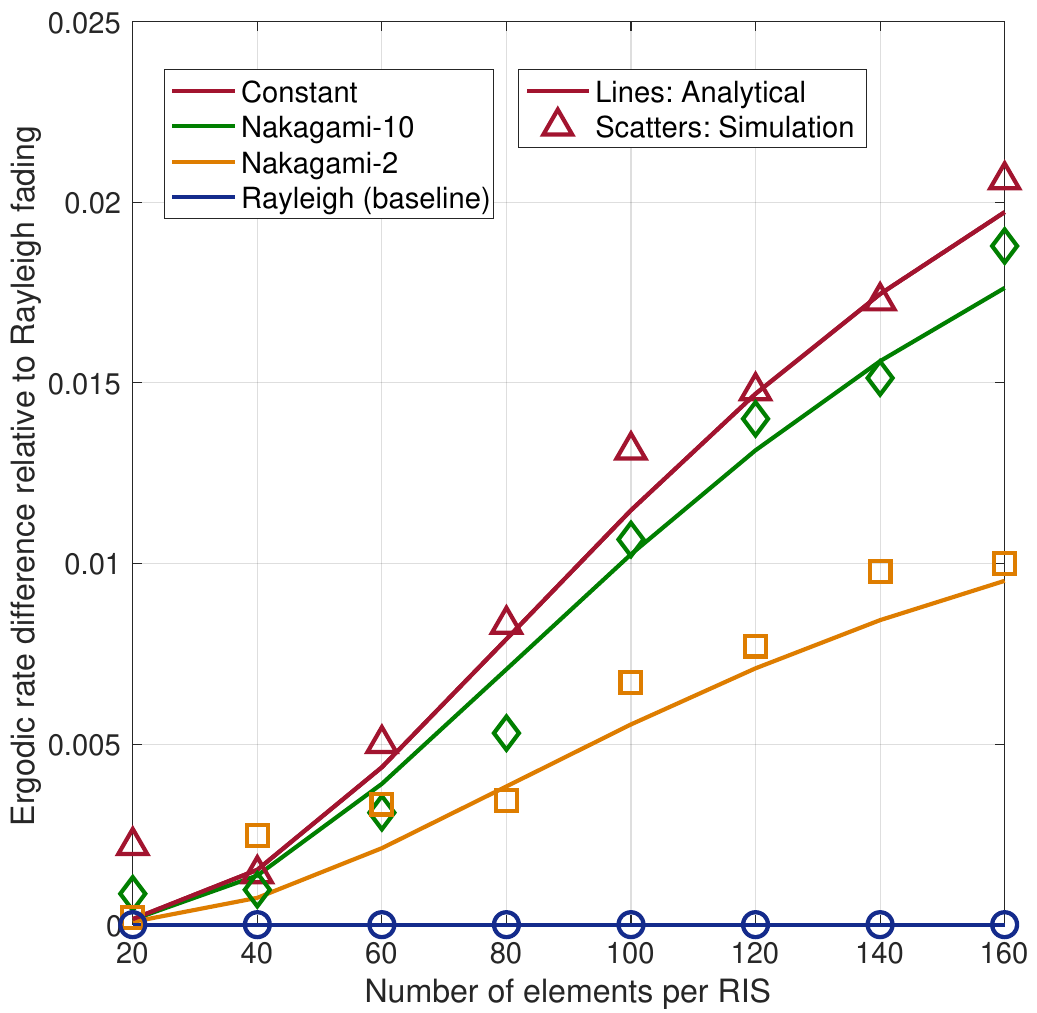}
	\caption{Ergodic rate sensitivity of RIS reflection links under different fading models, compared to Rayleigh fading.}
	\label{fig:fading}
	\vspace{-0.5cm}
\end{figure}

Fig.~\ref{fig:fading} compares the relative difference of ergodic rate of different fading models (specified in the legend) to the Rayleigh fading baseline. 
Although the deviation increases with the number of RIS elements, its magnitude remains on the order of $10^{-2}$, indicating a negligible impact on system performance.
This analysis demonstrates how fading variations influence the approximated ergodic rate.
For instance, within the Nakagami-$m$ fading model, larger $m$ values represent reduced variation and correlate with slightly higher ergodic rates. 
Furthermore, the constant reflected power model, which eliminates all variation, gives the highest ergodic rate. 
Therefore, simplified fading models can bound the ergodic rate for unknown fading conditions.
Specifically, the Rayleigh model establishes the lower bound, while the constant power model can provide an upper bound.
Consequently, for system-level performance assessment, simple models like Rayleigh and constant power are sufficient. 
This justification further validates the use of the derived closed-form expressions for RIS-reflected signal power, especially those based on Rayleigh fading.

\subsection{RIS placement and selection}

While the analytical results presented rely on PPP/BPP RIS deployment models, practical scenarios often exhibit repulsiveness between RIS nodes. 
Therefore, we evaluate the ergodic rate for equidistant RIS deployment along a circular arc. 
Additionally, we evaluate the case where only the strongest-link RIS is selected from a BPP-distributed set. 
As these two scenarios lack analytical solutions, we utilize Monte Carlo simulations for evaluation.

\begin{figure}
\centering
\includegraphics[width=0.9\linewidth]{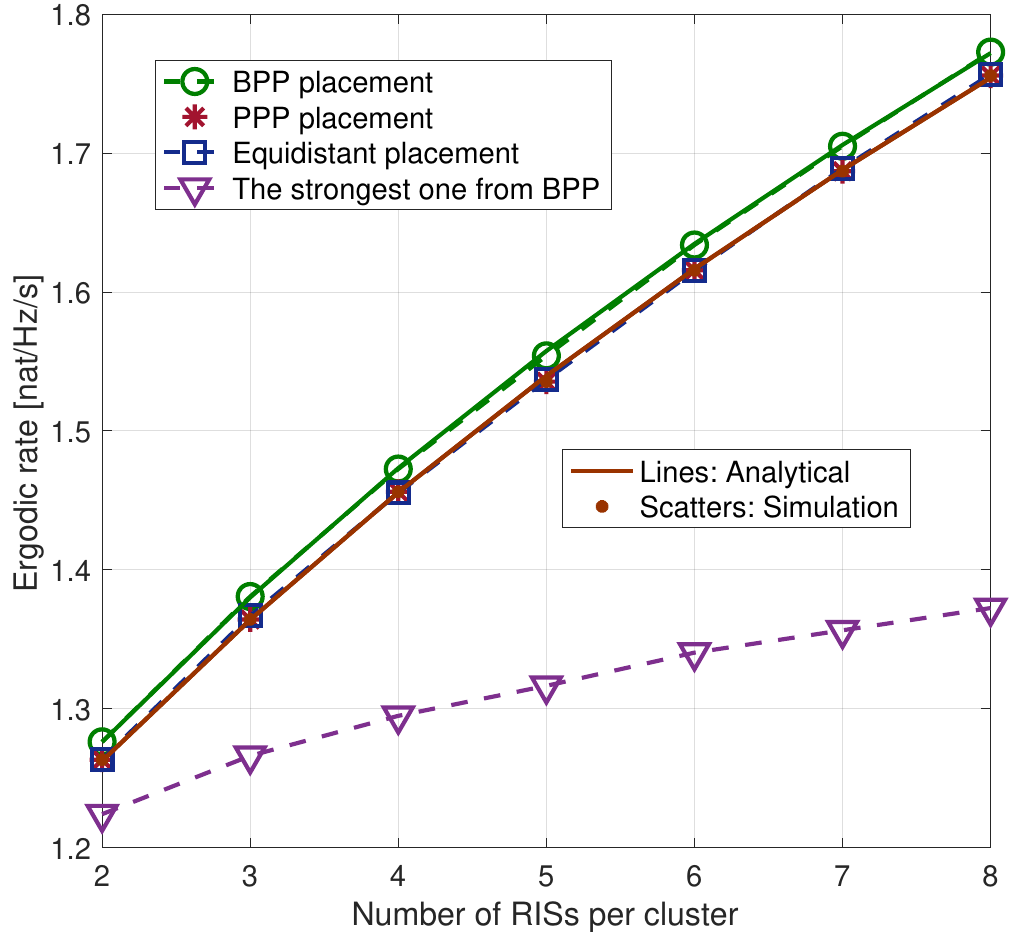}
\caption{Ergodic rate analysis for four RIS location scenarios: BPP, PPP, equidistant deployment, or best-link selection from a BPP.}
\label{fig:RIS_PPs}
\vspace{-0.5cm}
\end{figure}
In Fig.~\ref{fig:RIS_PPs}, the BPP model demonstrates a marginally higher ergodic rate compared to the PPP model, consistent with the observations in  Fig.~\ref{fig:near_fig}.
This suggests that varying the number of RISs can lead to a slight reduction in performance enhancement, given a fixed average RIS resource.
Utilizing multiple RISs to serve a UE is preferred because selecting only the optimal RIS for reflection results in a reduced performance gain with respect to the number of RISs, while also incurring additional signaling overhead.
Furthermore, the performance curve for an equidistant RIS deployment closely aligns with that of randomly located RISs. 
This indicates that the analytical model based on BPP/PPP accurately characterizes the system performance.

\subsection{Performance evaluation efficiency}
We conclude by comparing the computational efficiency of the analytical and Monte Carlo methods.
For all simulations described in this section, the numerical evaluation times\footnote{We performed both simulations on a 2 GHz Quad-Core Intel Core i5 laptop for a fair comparison.} are summarized in Table~\ref{tab:time_comparison}. 
The reported time value represents the average execution time required to compute a single data point shown in the referenced figure.
\begin{table}[htp!]
\centering
\vspace{-0.4cm}
\caption{Simulation time}
\label{tab:time_comparison}
\begin{tabular}{|c|c|c|}
\hline
&Analytic computation & Monte Carlo\\
\hline
Fig.~\ref{fig:near_fig}& 2.01s & 21.5s \\
\hline
Fig.~\ref{fig:interference}& 1.49s & 191s \\
\hline
Fig.~\ref{fig:fading}& 22.3s & 246.1s \\
\hline
Fig.~\ref{fig:RIS_PPs}& 0.101s & 12.5s \\
\hline
\end{tabular}
\vspace{-0.3cm}
\end{table}

In Table~\ref{tab:time_comparison}, it is shown that the analytical method offers significantly higher time efficiency in evaluating system performance for a given simulation configuration. 
Furthermore, the analytical results exhibit less variation, especially when the values are small, as shown in Fig.~\ref{fig:fading}. 
This implies the Monte Carlo method would necessitate a considerably higher number of simulation runs for comparable accuracy. 
Therefore, considering both time efficiency and accuracy, the analytical method outperforms the Monte Carlo approach. 

\section{Conclusions and Future Works}\label{sec:conclusion}
This work presents an efficient stochastic geometry analysis, characterizing reflected signals from multiple RISs using their Laplace transform and deriving closed-form expressions for specific scenarios.
Based on this analysis, we identify conditions under which the RIS-reflected interference can be neglected or becomes detrimental.
We show that a Rayleigh fading model accurately represents RIS reflected links, given the reflected power. 
Furthermore, we show that various RIS placement strategies can be readily adapted and evaluated, validating the suitability of BPP and PPP models for multiple RIS deployments.  
Finally, our analytical approach provides a significant advantage in both computational efficiency and accuracy over Monte Carlo simulations.

The availability of closed-form expressions for the reflected signal power enables efficient analysis and expands the scope of potential applications. 
For example, incorporating stochastic geometry models of random blockage in future research will enhance our understanding of RIS deployment in realistic environments.

\vspace{-0.2cm}
\section{Acknowledgement}
The work of G. Sun and F. Baccelli was supported by the European
Research Council project titled NEMO under grant ERC
788851 and by the CIFRE Ph.D. program, under the grant agreement number 2021/0304 to NOKIA Networks France. 
The work of F. Baccelli was also supported by the Horizon Europe project titled INSTINCT under grant SNS 101139161, and by the French National Agency for
Research project titled France 2030 PEPR réseaux du Futur
under grant ANR-22-PEFT-0010.
\bibliographystyle{ieeetr}
\bibliography{reference.bib}

\end{document}